\def\aujour{\number\day \space \ifcase\month\or
janvier\or f�vrier\or mars\or avril\or mai\or
juin\or juillet\or ao�t\or septembre\or octobre\or
novembre\or d�cembre\fi \space \number\year}
\def\cH{{\cal H}}
\def\cL{{\cal L}}
\newtheorem{remark}{Remark}
\newtheorem{ass}{Assumption}
\newtheorem{theorem}{Theorem}
\def\C{{\setbox0=\hbox{$\displaystyle{\rm C}$}
        \hbox{\hbox to0pt{\kern 0.4\wd0\vrule height 0.95\ht0\hss}\box0}}}
\def\Q{{\setbox0=\hbox{$\displaystyle{\rm Q}$}%
    \hbox{\raise 0.2\ht0\hbox to0pt{\kern 0.4\wd0\vrule height
    0.85\ht0\hss}\box0}}} 
\def\cH2{{\cal H}_2} 
\def\cL2{\mathop{\mathcal L}_{2}} 
\def\cRH2{\mathop{\cal R \cal H}_2} 
\def\cRL2{\mathop{\cal R \cal L}_{2}} 
\newcommand{\norm}[1]{\left\|{#1}\right\|}
\DeclareRobustCommand\sfrac[1]{\@ifnextchar/{\@sfrac{#1}}
                                            {\@sfrac{#1}/}}
\def\@sfrac#1/#2{\leavevmode\kern.1em\raise.5ex
         \hbox{$\m@th\fontsize\sf@size\z@
                           \selectfont#1$}\kern-.1em
         /\kern-.15em\lower.25ex
          \hbox{$\m@th\fontsize\sf@size\z@
                            \selectfont#2$}}
\title{\LARGE \bf
Unsupervised Physics-Informed Neural Network-based Nonlinear Observer design for autonomous systems using contraction analysis}
\author{Yasmine Marani\textsuperscript{1,*}, Israel Filho\textsuperscript{1,*}, Tareq Al-Naffouri\textsuperscript{1} and Taous-Meriem Laleg-Kirati\textsuperscript{2}\IEEEmembership{IEEE Senior Members}
\thanks {\textsuperscript{1} Computer, Electrical and Mathematical Science and Engineering Division (CEMSE), King Abdullah University of Science and Technology (KAUST), Thuwal 23955-6900, Saudi Arabia
\thanks \textsuperscript{2} Taous Meriem Laleg-Kirati is affiliated with the National Institute for Research in Digital Science and Technology, Paris-Saclay, France.
\thanks \textsuperscript{*} equal contributors. 
        {\tt\small yasmine.marani@kaust.edu.sa,
        israel.filho@kaust.edu.sa,
        tareq.alnaffouri@kaust.edu.sa,
        taous-meriem.laleg@inria.fr }}%
}
\begin{document}
\maketitle
\sloppy
\thispagestyle{empty}
\pagestyle{empty}
\begin{abstract}
Contraction analysis offers, through elegant mathematical developments, a unified way of designing observers for a general class of nonlinear systems, where the observer correction term is obtained by solving an infinite dimensional inequality that guarantees global exponential convergence. However, solving the matrix partial differential inequality involved in contraction analysis design is both analytically and numerically challenging and represents a long-lasting challenge that prevented its wide use. Therefore, the present paper proposes a novel approach that relies on an unsupervised  Physics Informed Neural Network (PINN) to design the observer's correction term by enforcing the partial differential inequality in the loss function. The performance of the proposed PINN-based nonlinear observer is assessed in numerical simulation as well as its robustness to measurement noise and neural network approximation error.  
\end{abstract}

\section{INTRODUCTION}
\label{sec:intro}
\noindent 

Nonlinear observer design is a fundamental research area in control theory that is constantly attracting attention from researchers in the community.  While general and systematic frameworks for state estimation of linear systems with global convergence guarantees are well-established in the literature \cite{Luenberger1964,Kalman1960}, nonlinear observer design still suffers from a lack of generality and global convergence guarantees.\\
The literature abounds with various nonlinear observer methods such as high gain observers \cite{khalil2008}, immersion and invariance-based observers \cite{karagiannis2008invariant}, observers based on geometric methods \cite{KRENER1983}, those based on Linear Matrix Inequalities (LMI)s \cite{Rajamani1998}, algebraic estimators \cite{Yasmine2023}, approaches relying on an injective transformation into a larger latent space \cite{Kazantzis1997}, and the well-known Extended Kalman Filter \cite{anderson1979}. However, most of the above-mentioned observer designs rely heavily on the class of nonlinearity of the system or provide only local convergence guarantees. On the other hand, observer design based on contraction analysis is slowly regaining popularity since its first introduction in the late nineties in \cite{slotine1996a}. They offer a generic design framework that is suitable for a general class of smooth nonlinear systems while providing global exponential convergence guarantees. \\
Contraction analysis was introduced in \cite{LOHMILLER1998683}, drawing on concepts from continuum mechanics and differential geometry. It offers a distinct perspective on stability analysis compared to traditional methods like Lyapunov theory \cite{Nijmeijer1990, isidori1995nonlinear,  khalil2002nonlinear}. Rather than focusing on the convergence of a system to a target trajectory or equilibrium point, contraction analysis assesses stability by determining whether all trajectories of a system converge toward one another. In essence, a system is considered contracting if it eventually "forgets" its initial conditions or any temporary disturbances, making this approach particularly well-suited for observer design. Indeed, the initial motivation behind contraction analysis was primarily related to the design of observers for nonlinear systems \cite{manchester2018contracting}. Numerous contraction-based nonlinear controllers and observers have since been proposed in the literature such as the work in \cite{slotine1996a,slotine1996b, lohmiller1997,  LOHMILLER1998683, sanfelice2011convergence, dani2014observer, manchester2018contracting}. Contraction analysis provides a systematic framework for designing nonlinear observers, leveraging sophisticated mathematical developments that ensure global exponential convergence. The observer's correction term is determined by solving a Matrix Partial Differential Inequality (MPDI). However, despite the theoretical elegance of this approach, solving the MPDI presents significant analytical and numerical challenges that limit the use and implementation of contraction-based observers in their original form \cite{PeterGiesl2023}.
Therefore, we aim in the present paper to overcome this challenge by proposing a learning-based approach to design the contraction-based nonlinear observer's gain satisfying the MPDI.  \\
\noindent 
Artificial Neural Networks (ANNs) have emerged as powerful tools in approximating solutions to ordinary and partial differential equations (ODEs and PDEs). By leveraging their ability to approximate complex nonlinear functions, neural networks can be trained to satisfy the differential equations governing physical phenomena directly. This approach, often referred to as Physics-Informed Neural Networks (PINNs), was pioneered by Raissi \textit{et al.}~\cite{raissi2019}, and incorporates the underlying physical laws into the learning process, allowing the network to learn solutions that are consistent with the known physics. The advent of automatic differentiation has further revolutionized this field by enabling the efficient computation of derivatives required in the training process. This capability facilitates the direct incorporation of differential equations into the neural network's loss function, ensuring that the learned solutions not only fit the data but also satisfy the physical laws described by the differential equations. \\
\noindent
Physics-Informed Neural Networks (PINNs) have been extensively applied in modeling and parameter estimation of nonlinear dynamical systems. For instance, in \cite{zhai2023parameter}, a Runge–Kutta-based PINN framework was proposed for parameter estimation and modeling of nonlinear systems where the physical loss improves the integral involved in the Runge-Kutta method, reducing the error in the learnable trajectories. From the perspective of this work, PINNs have been utilized in observer design for discrete-time and continuous-time nonlinear systems. The work in \cite{alvarez2024nonlinear} introduced a PINN-based approach for designing nonlinear discrete-time observers, showing improved performance in state estimation without the need for explicit system models. Moreover, learning-based methods have been employed to design the KKL observer for autonomous nonlinear systems, as presented in \cite{niazi2023learning, peralez2021deep}, and nonlinear systems subject to measurement delay \cite{Yasmine2023kkl, MARANI2025}, where neural networks were trained to approximate a forward and inverse map involved in KKL observer design. The work in \cite{antonelo2024physics, ekeland2024physics} has shown that PINNs are effective in model predictive control, addressing practical challenges in the oil and gas industry, and accurately identified nonlinear dynamical systems. These findings emphasize the versatility of PINNs in system identification and observer design by embedding physical laws directly within neural network architectures, resulting in models that are both data-efficient and physically consistent, achieving robust generalization even for time horizons far beyond the training or operating points and maintaining resilience against additive noise.


\noindent
In the present paper, we propose an unsupervised learning approach to design a nonlinear observer for a general class of nonlinear systems based on the contraction analysis. The proposed approach relies on a Physics Informed Neural Network to enforce the contraction condition in the learning process. Based on the MDPI involved in the gain design of contraction-based nonlinear observer, we formulate an optimization problem by strategically designing the cost function of the Physics Informed Neural Network. Furthermore, we establish the robustness of the proposed learning-based nonlinear observer to the neural network approximation error and measurement noise and derive conditions ensuring exponential input-to-state stability. \\
The present paper is organized as follows:  Section \ref{sec:prelim} gives some background on contraction analysis and contraction-based nonlinear observers. The problem addressed in this paper is formulated in section \ref{sec:formulation}. Subsequently, we present the proposed unsupervised learning approach to design the observer's gain in section \ref{sec:pinnDes}, and analyze the robustness of the designed observer to the neural network approximation error and measurement noise in section \ref{sec:robustness}. The performance and robustness of the proposed observer is evaluated through two numerical examples in \ref{sec:sim}. Finally, a summary of the contributions and future work directions are provided in section \ref{sec:conclusion}.   
\textbf{Notation.} \\
For a square matrix $M$, $\operatorname{He}\{M\}=\frac{1}{2}(M+M^T)$ is the Hermitian Part of the matrix $M$. The class $\mathcal{C}^1$ is the class of continuous functions with continuous first derivatives. The Euclidean norm of a vector $u$ is denoted by $\norm{u}$. 


\section{Preliminaries on Contraction Theory}\label{sec:prelim}
Inspired by fluid mechanics and differential geometry,  \textit{Contraction Analysis} offers an alternative way of studying stability. Usually, stability is studied with respect to a nominal trajectory or an equilibrium point. Instead, contraction analysis studies the convergence of the solutions of a dynamical system to each other. In other words, the system is considered stable if the system's final behavior is independent of the initial condition \cite{LOHMILLER1998683}. A central result that was derived in \cite{slotine1996a,slotine1996b} is that if all neighboring trajectories converge to each other, then the system is globally exponentially stable and all trajectories converge to a single one.     \\
We consider the following general autonomous nonlinear system
\begin{equation}\label{model}
    \left\{\begin{array}{l}\dot{x}(t)=f(x)\\
    y(t)=h(x),
    \end{array}\right.
\end{equation}
where $x(t) \in \mathbb{R}^n$ is the state,  $y(t) \in \mathbb{R}^p$ is the output,  $f: \mathbb{R}^n  \rightarrow \mathbb{R}^n$ and $h: \mathbb{R}^n \rightarrow \mathbb{R}^p$ are smooth vector fields. 
The main result of \cite{LOHMILLER1998683} is that if there exists a bounded symmetric and positive definite matrix $P \in \mathbb{R}^{n \times n}$, such that 

\begin{equation}
\left(\frac{\partial f^{\mathrm{T}}}{\partial x} P+P \frac{\partial f}{\partial x}\right) \leq-2\lambda P.
\label{contract}
\end{equation}

Then, system \eqref{model} is contracting with rate $\lambda$. Furthermore, this result is considered to be a generalization and strengthening of the Krasovskii's global convergence theorem \cite{hahn1967}.

\subsection{Contraction analysis in the context of nonlinear observer design}
One of the main features of contraction analysis is the ability to conclude about a system's stability independently of the knowledge of any predefined nominal trajectory, which is a particularly attractive feature for observer design. Contraction analysis offers, therefore, a universal way of designing observers for nonlinear systems as in \eqref{model}, provided that the system is differentially detectable, which is a necessary condition for the existence of an exponentially stable observer of the form \cite{BERNARD2022}: 
\begin{equation}
\label{observer}
\left\{\begin{array}{l}
\dot{\hat{x}}=f(\hat{x})+k(\hat{x}, y) \\
\hat{y}=h(\hat{x}),
\end{array}\right.
\end{equation}
where $\hat{x} \in \mathbb{R}^n$ is the estimated state,  $\hat{y} \in \mathbb{R}^p$ is the estimated output, and $k: \mathbb{R}^n \times \mathbb{R}^p \rightarrow \mathbb{R}^n $ is the observer's nonlinear correction term. The correction term $k$ and the nonlinear function $f$ are assumed to be of at least class $\mathcal{C}^1$. \\
The following theorem provides an approach to designing the correction term $k$ based on the contraction analysis. 

\begin{theorem} \cite{BERNARD2022}
\label{th1}
Consider the smooth nonlinear system \eqref{model}. If there exists a positive definite matrix $P\in \mathbb{R}^{n\times n}$, a $\mathcal{C}^1$ function $k: \mathbb{R}^n \times \mathbb{R}^p \rightarrow \mathbb{R}^n $ and a real positive number $\lambda$, such that 
\begin{equation}\label{contractth}
\left\{\begin{aligned}
& \operatorname{He}\left\{P\left[\frac{\partial f}{\partial \hat{x}}(\hat{x})+\frac{\partial k}{\partial \hat{x}}(\hat{x}, y)\right]\right\} \leqslant-2\lambda P
 \quad \forall(\hat{x}, y) \in \mathbb{R}^n \times \mathbb{R}^p\\
 &k(x, h(x))=0 \quad \forall x \in \mathbb{R}^n,  \\
\end{aligned} \right.
\end{equation}
Then the observer in \eqref{observer} is a globally exponentially stable observer for system \eqref{model}. 
\end{theorem}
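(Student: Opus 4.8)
The plan is to analyze the dynamics of the estimation error along the lines of the original contraction argument of \cite{LOHMILLER1998683,slotine1996a}, but applied to the error system rather than to the plant. First I would introduce the estimation error $e = \hat{x} - x$ and, more importantly, work with its infinitesimal displacement $\delta e$ (equivalently, consider a homotopy of solutions $\hat{x}_s$ joining a given observer trajectory $\hat{x}$ to the true trajectory $x$, with $s \in [0,1]$, so that $\hat{x}_0 = x$ and $\hat{x}_1 = \hat{x}$). Using the output consistency condition $k(x,h(x)) = 0$, note that $x$ itself is a solution of the observer dynamics $\dot{\hat{x}} = f(\hat{x}) + k(\hat{x}, h(x))$ fed with $y = h(x)$; hence both $\hat{x}$ and $x$ are trajectories of the same (time-varying, because $y(t)$ enters as an exogenous signal) vector field $F(\hat{x}, t) := f(\hat{x}) + k(\hat{x}, y(t))$.

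Next I would write the variational (linearized) dynamics along any solution of this vector field:
\begin{equation}
\dot{\delta \hat{x}} = \left[ \frac{\partial f}{\partial \hat{x}}(\hat{x}) + \frac{\partial k}{\partial \hat{x}}(\hat{x}, y) \right] \delta \hat{x}.
\end{equation}
Then I would introduce the candidate Riemannian energy $V = \delta \hat{x}^{\mathrm{T}} P\, \delta \hat{x}$ (using that $P$ is constant, symmetric, positive definite) and differentiate along the variational flow:
\begin{equation}
\dot{V} = \delta \hat{x}^{\mathrm{T}} \operatorname{He}\!\left\{ P\left[ \frac{\partial f}{\partial \hat{x}}(\hat{x}) + \frac{\partial k}{\partial \hat{x}}(\hat{x}, y) \right] \right\} \delta \hat{x} \;\leqslant\; -2\lambda\, \delta \hat{x}^{\mathrm{T}} P\, \delta \hat{x} = -2\lambda V,
\end{equation}
where the inequality is exactly the MPDI \eqref{contractth}, valid for all $(\hat{x}, y)$ and in particular along the trajectory. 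Grönwall's inequality then gives $V(t) \leqslant V(0) e^{-2\lambda t}$, hence $\norm{\delta \hat{x}(t)} \leqslant \sqrt{\dcond(P)}\, e^{-\lambda t} \norm{\delta \hat{x}(0)}$ using the bounds $\lambdam(P) \norm{\delta \hat{x}}^2 \leqslant V \leqslant \lambdaM(P)\norm{\delta \hat{x}}^2$.

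Finally I would integrate this infinitesimal contraction estimate over the homotopy path connecting $x(t)$ and $\hat{x}(t)$: the Riemannian distance (here, since $P$ is constant, essentially the weighted Euclidean distance) between the two trajectories contracts at rate $\lambda$, which yields
\begin{equation}
\norm{\hat{x}(t) - x(t)} \leqslant \sqrt{\tfrac{\lambdaM(P)}{\lambdam(P)}}\; e^{-\lambda t}\, \norm{\hat{x}(0) - x(0)} \qquad \forall t \geqslant 0,
\end{equation}
i.e. global exponential convergence of the observer error, which is the claim. The main obstacle — and the step deserving the most care — is the passage from the infinitesimal (variational) contraction estimate to a genuine bound on the finite distance between the two actual trajectories $\hat{x}$ and $x$: one must choose a valid smooth homotopy, justify differentiating under the integral over $s$, and invoke the fact that the MPDI holds uniformly on all of $\mathbb{R}^n \times \mathbb{R}^p$ so that every intermediate curve $\hat{x}_s$ is subject to the same contraction rate. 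The rest is a routine Grönwall/eigenvalue-bound computation.
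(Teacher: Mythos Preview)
The paper does not actually prove Theorem~\ref{th1}: it is quoted from \cite{BERNARD2022} as a background result, with no argument supplied. Your variational/homotopy argument is a correct proof and is the standard contraction-analysis route (the step you flag --- lifting the infinitesimal bound to the finite distance via a path integral over the homotopy $s\mapsto\hat{x}_s$ --- is indeed the only non-routine point, and it goes through because the MPDI holds uniformly on $\mathbb{R}^n\times\mathbb{R}^p$).

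The closest comparison available in the paper is the proof of Theorem~\ref{thISS}, which reveals the authors' preferred style for the special case $P=I$: there they work directly with the \emph{finite} error Lyapunov function $V=\tfrac{1}{2}\|x-\hat{x}\|^2$ and bound the term $(x-\hat{x})^{\mathrm{T}}[f(x)-f(\hat{x})-k(\hat{x},y)]$ by invoking an integral/mean-value identity from \cite[Theorem~4.3]{BERNARD2022}, bypassing the variational system and the homotopy entirely. So your route and the paper's implied route differ: you stay at the infinitesimal level (where the MPDI applies verbatim) and then integrate over a path, whereas the paper goes straight to the finite error at the cost of importing a mean-value-type lemma from the reference. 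Your approach makes the role of the metric $P$ transparent and extends immediately to state-dependent $P(\hat{x})$; the direct-Lyapunov approach is shorter for constant $P$ but less flexible. One cosmetic remark: with the paper's convention $\operatorname{He}\{M\}=\tfrac{1}{2}(M+M^{\mathrm{T}})$, your expression for $\dot V$ should carry an extra factor of $2$, so the decay rate you obtain is $2\lambda$ rather than $\lambda$; this does not affect the conclusion.
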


\section{Problem Formulation }\label{sec:formulation}
While Theorem \ref{th1} provides a general approach for designing nonlinear observers, determining the correction term involves solving a matrix partial differential inequality, which is highly challenging both analytically and numerically, and to the best of our knowledge, no existing numerical solvers are specifically equipped to handle this class of problems. To overcome this limitation, we exploit the universal approximation principle of neural networks to numerically approximate the observer's correction term. We specifically rely on an unsupervised Physics-Informed Neural Network approach that enforces the contraction conditions of Theorem \ref{th1} to learn the observer's gain. \\
Moreover, although the observer in \eqref{observer} satisfying \eqref{contractth} is global, using a neural network requires training on a closed set of interests. Therefore, we consider in the remainder of the paper, systems in the form of \eqref{model} that satisfy the following assumption

\begin{ass}
\label{assumption1}
    System \eqref{model} is forward invariant within $\mathcal{X}$ i.e.  there exist a compact set $\mathcal{X} \subset \mathbb{R}^n$, such that for all initial conditions $x(0)\in \mathcal{X}_0 \subset \mathcal{X}$ and all $t >0, X\left(t,x_0\right) \in \mathcal{X}$ and $Y\left(t,x_0\right) \in \mathcal{Y}$. Where $X\left(t,x_0\right)$ is the solution of \eqref{model} at time $t$, initialized at $x(0)=x_0$, and $Y\left(t,x_0\right)$ the corresponding output.
\end{ass}

\begin{remark}
    To reduce the complexity of the MDPI in \eqref{contractth}, we fix the matrix $P$ and solve \eqref{contractth} for $k$, instead of solving for both $P$ and $k$. Therefore, we consider in the rest of the paper $P=I$, where $I$ is the $n \times n$ identity matrix.   
\end{remark}


\section{Unsupervised learning of the observer's correction term }\label{sec:pinnDes}

\begin{figure}
    \centering
    \includegraphics[width=1\linewidth]{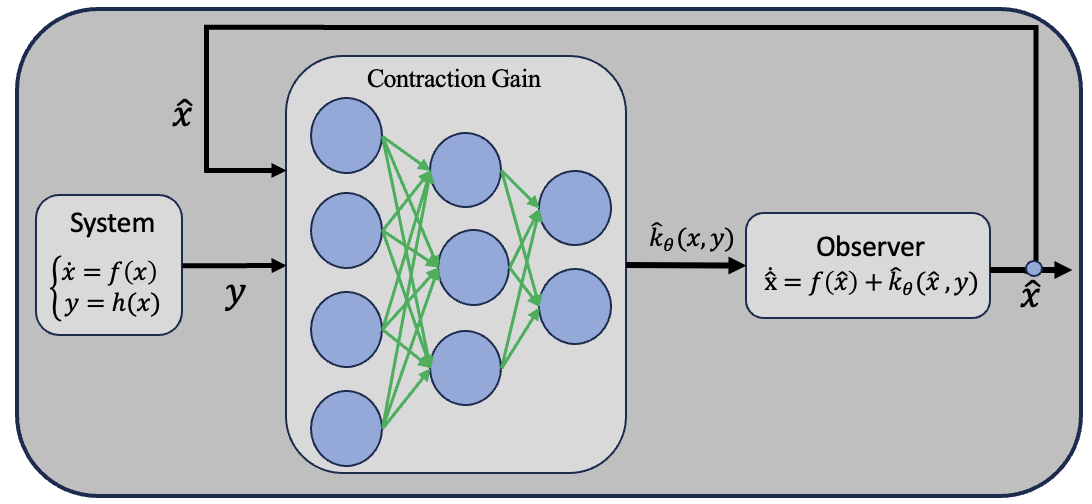}
    \caption{Block diagram of the proposed unsupervised learning-based contraction nonlinear observer design.}
    \label{fig:systemVisual}
\end{figure}



In the present section, we present the proposed Physics-Informed Neural Network approach to learn the gain of the observer in \eqref{observer}.  
Since the goal is to learn the correction term of the observer, we do not have pre-calculated trajectories for the estimated state $\hat{x}$. Therefore, we propose an unsupervised PINN-based approach to learn the observer's gain by leveraging the contraction analysis and enforcing the conditions in Theorem \ref{th1} to guarantee the exponential stability of the learned observer.

\noindent
Let $\hat{k}_\theta: \mathbb{R}^n \times \mathbb{R}^p \rightarrow \mathbb{R}^n $
be the learned observer gain given by 
\begin{equation}
\label{eq:mlpGeneral}
\hat{k}_{\theta}(\hat{x}, y) = T_{\theta}(\hat{x}, y),
\end{equation}

\noindent
where \(T_{\theta}\) is the function parameterized by the neural networks and $\theta$ represents the weights and biases. The block diagram of the proposed learning-based nonlinear observer is depicted in Fig. \ref{fig:systemVisual}. 
By integrating this contraction condition into the neural network training, we effectively regularize the model using the physics knowledge and constraints of the observer, which allows the neural network to learn the observer's correction term solely from the contraction condition, reducing, therefore, the need for large datasets. Therefore, we formulate the contraction condition of Theorem \ref{th1} into an optimization problem and derive a loss function that enforces this condition.

\noindent 
The dataset used for training as well as the loss function used in the training process are detailed in the following subsections. 

\subsection{Collocation point generation}
Since we do not rely on data points for the training, we generate collocation points and construct a physics dataset $D=\{\hat{x}^{(j)}, y^{(j)}\}$ to minimize the physics loss by uniformly sampling $(\mathcal{X},\mathcal{Y})$, with $j=1,...,N$ representing the sample number and $N$ the maximum number of collocation points.   

\subsection{Construction of the loss function}
Instead of minimizing a loss function associated with data trajectories plus a physics term, we focus exclusively on a physics-based loss function. The loss function is formulated to penalize deviations from the contraction condition across the domain of interest, and is a combination of the contraction condition and the boundary condition loss functions:
\begin{equation}
\mathcal{L} (\hat{x}^{(j)},y^{(j)})= \mu_1 \mathcal{L}_{\text{MPDI}} (\hat{x}^{(j)},y^{(j)}) + \mu_2 \mathcal{L}_{\text{BC}} (\hat{x}^{(j)}),
\label{loss}
\end{equation}

where $\mathcal{L}_{\text{MPDI}} $ and $\mathcal{L}_{\text{BC}}$ represent respectively the contraction condition loss function and the boundary condition loss function, and $\mu_1$ and $\mu_2$ are weighting coefficients.

\subsubsection{Contraction loss} The primary loss function is designed to enforce the contraction condition, ensuring that the system's differential dynamics satisfies the required inequality \eqref{contractth} which becomes
\begin{equation}
    \label{eq:contractionEquality}
   \underbrace{ \operatorname{He}\left\{\frac{\partial f}{\partial \hat{x}}(\hat{x})+\frac{\partial \hat{k}_\theta}{\partial \hat{x}}(\hat{x}, y)\right\} +2\lambda I }_{D(\hat{x},y)}\leqslant 0.
\end{equation} 
The above Matrix Partial Differential Inequality (MDPI) is expressed in a negative semi-definiteness sense. Since $D(\hat{x},y)$ is symmetric, we exploit the properties of 
its principle minors to impose the negative semi-definiteness condition in \eqref{eq:contractionEquality} by recalling the following Sylvester's criterion.

\begin{theorem} \cite{gilbert1991positive}
\label{th2SylvCrit}
A symmetric matrix \( D\in\mathbb{R}^{n\times n},\) is negative semi-definite if and only if \( (-1)^i \Delta_i \geq 0 \) for all leading principal minors \( \Delta_i \), where \( i = 1, 2, \dots, n \).
\end{theorem}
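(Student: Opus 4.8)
The plan is to establish the two implications of the stated equivalence separately, reducing the semi-definite case to the strict (definite) case wherever possible. Throughout, write $D_{[i]}$ for the $i$-th leading principal submatrix and $\Delta_i = \det(D_{[i]})$.

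For the forward direction, suppose $D$ is negative semi-definite, so that $-D$ is positive semi-definite. The leading principal submatrix $(-D)_{[i]}$ is the Gram matrix obtained by restricting the quadratic form $v \mapsto v^{\mathrm T}(-D)v$ to the coordinate subspace spanned by the first $i$ canonical basis vectors, and such a restriction of a positive semi-definite form is again positive semi-definite. A positive semi-definite matrix has nonnegative determinant, since its determinant is the product of its nonnegative eigenvalues, so $\det((-D)_{[i]}) \geq 0$. Because $\det((-D)_{[i]}) = (-1)^i \det(D_{[i]}) = (-1)^i \Delta_i$, this yields $(-1)^i \Delta_i \geq 0$ for every $i$, as claimed.

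For the reverse direction I would first handle the strict case and then pass to the limit. The strict statement, namely that $D$ is negative definite if and only if $(-1)^i \Delta_i > 0$ for all $i$, I would prove by induction on $n$ using a Schur-complement (equivalently $LDL^{\mathrm T}$) factorization: writing $D$ in block form with leading block $D_{[n-1]}$, the inductive hypothesis makes $D_{[n-1]}$ definite of the correct sign, the Schur complement supplies the final pivot, and the determinant product formula ties the sign of that pivot to the ratio $\Delta_n / \Delta_{n-1}$. To recover the semi-definite statement, the natural device is perturbation: set $D_\varepsilon = D - \varepsilon I$ for small $\varepsilon > 0$, argue that its leading minors obey the strict alternating-sign condition, conclude by strict Sylvester that $D_\varepsilon$ is negative definite, and then let $\varepsilon \to 0^+$ so that $D = \lim_{\varepsilon \to 0^+} D_\varepsilon$ is negative semi-definite by continuity of the eigenvalues.

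The hard part, and the genuine crux of the statement, is precisely this perturbation step. Each leading minor $\det((D_\varepsilon)_{[i]})$ is a polynomial in $\varepsilon$, and the hypotheses $(-1)^i \Delta_i \geq 0$ control only its value at $\varepsilon = 0$; when several leading minors vanish there, the correct strict sign of $(-1)^i \det((D_\varepsilon)_{[i]})$ for small $\varepsilon$ is governed by the lowest-order nonvanishing coefficient of the polynomial, and that coefficient is built from the lower-order principal minors rather than the leading minors alone. I would therefore expect most of the effort to go into analyzing these degenerate boundary configurations, and this is exactly the point at which the information carried by the leading minors by themselves may fail to determine the sign, so the step is the one most likely to demand structural input beyond what is stated.
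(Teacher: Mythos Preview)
The paper does not supply its own proof of this statement; it is quoted from the cited reference and used only as a tool to build the loss function. So there is no paper proof to compare against.

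That said, your analysis of the reverse direction is more than a technical worry: the implication you are trying to prove is false as stated. The condition $(-1)^i\Delta_i\ge 0$ on the \emph{leading} principal minors alone is not sufficient for negative semi-definiteness. A minimal counterexample is
\[
D=\begin{pmatrix}0&0\\0&1\end{pmatrix},
\]
for which $\Delta_1=0$ and $\Delta_2=0$, so $(-1)^i\Delta_i\ge 0$ holds for $i=1,2$; yet $e_2^{\mathrm T}De_2=1>0$, so $D$ is not negative semi-definite. Your perturbation $D_\varepsilon=D-\varepsilon I$ has $\det((D_\varepsilon)_{[2]})=-\varepsilon(1-\varepsilon)<0$ for small $\varepsilon>0$, so the strict alternating-sign hypothesis fails exactly where you feared it would. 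The correct semi-definite Sylvester criterion requires the sign condition on \emph{all} principal minors, not just the leading ones; under that stronger hypothesis your perturbation argument does go through, because the $\varepsilon$-coefficient of $\det((-D_\varepsilon)_{[i]})$ is a sum of order-$(i-1)$ principal minors of $-D_{[i]}$ and is therefore controlled.

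Your forward direction is correct and cleanly written. Your instinct that ``the information carried by the leading minors by themselves may fail to determine the sign'' was exactly right; the obstruction is not a missing trick but a false statement.
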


\noindent
Therefore,  $\mathcal{L}_{\text{MPDI}}$ is constructed by penalizing the leading principal minors of \eqref{eq:contractionEquality} as follows
\begin{equation}
\label{eq:minors}
\mathcal{L}_{\text{MPDI}} (\hat{x}^{(j)},y^{(j)}) = \frac{1}{N} \sum_{j=1}^{N} \sum_{i=1}^{n} \rho_i l_i(\hat{x}^{(j)},y^{(j)}),
\end{equation}

where $\rho_i$ are weighting coefficients for the principle minors loss function $l_i$ given by 

\begin{equation*}
    l_i(\hat{x}^{(j)},y^{(j)})=\left\{\begin{array}{l c}
      \max(0, \Delta _i (\hat{x}^{(j)},y^{(j)}))   &; i \text{ is odd} \\
      \min(0, \Delta _i(\hat{x}^{(j)},y^{(j)}))   &; i \text{ is even}  
    \end{array}   \right.
\end{equation*}

with $\Delta _i(\hat{x}^{(j)},y^{(j)})$ representing the leading principle minors of $D(\hat{x}^{(j)},y^{(j)})$, for $i=1,...,n$.

\subsubsection{Boundary condition loss}
The loss function for the boundary condition is constructed by considering that if the second argument of the gain $\hat{k}_\theta$ is equal to the output function of its first argument, then the observer's trajectory is aligned with the trajectory of the system, and the observer's gain is identically zero. One way to enforce this condition is to consider the following loss function:
\begin{equation}
\mathcal{L}_{\text{BC}} (\hat{x}) = \frac{1}{N} \sum_{j=1}^{N} ||\hat{k}_\theta(\hat{x}^{(j)},h(\hat{x}^{(j)}))||^2. 
\label{eqBC}
\end{equation}

\subsection{Training Pseudo-Code}
Training the PINN requires computing the loss function in \eqref{loss}, which in turn requires computing the jacobian of $f$ and $\hat{k}_\theta$. Since the description of $f$ is given by the model, the Jacobian $J_x = \frac{\partial f}{\partial \hat{x}}$ is computed analytically to avoid any numerical differentiation errors. On the other hand,  the jacobian of the gain $\frac{\partial \hat{k}_\theta}{\partial \hat{x}}$ is computed numerically by exploiting the actual automatic differential frameworks behind neural networks optimization. The training algorithm is provided upon request. 

\begin{algorithm}
\caption{Training Algorithm of the PINN-based Contraction Nonlinear Observer}
\begin{algorithmic}[1]
    \State \textbf{Input:} Dataset $D$, initial parameters $( \theta )$ of $( T_{\theta}(\hat{x}, y) )$, $( f )$ from equation \eqref{model}, the Jacobian $( J_x = \frac{\partial f}{\partial \hat{x}} $), learning rates $( \alpha )$ (Adam), $( \beta $) (L-BFGS), number of epochs $( N_{\text{epochs}} $) and loss weights $( \mu_1, \mu_2 )$. 
    
    \State // \textbf{Train with Adam}
    \For{$( q = 1, \dots, N_{\text{epochs}} )$}
        \For{each batch $( B \subset D)$}
            \State Compute $( \hat{k}_{\theta}(\hat{x}, y) = T_\theta(\hat{x}, y) $) for $( (\hat{x}, y) \in B )$
            \State Compute $\frac{\partial \hat{k}_\theta}{\partial \hat{x}}(\hat{x}, y)$ numerically for $( (\hat{x}, y) \in B )$
            \State Compute $( J_x = \frac{\partial f }{\partial \hat{x}}(\hat{x},y) )$ for $( (\hat{x}, y) \in B )$
            \State Compute $( k_{\theta}(\hat{x}, h(\hat{x})) = T_\theta(\hat{x}, h(\hat{x})) $) for $( \hat{x} \in B )$
            
            \State Compute $(\mathcal{L} = \mu_1 \mathcal{L}_{\text{MPDI}} + \mu_2 \mathcal{L}_{\text{BC}} )$, where $( \mathcal{L}_{\text{MPDI}}, \mathcal{L}_{\text{BC}} )$ are given by equations \eqref{eq:minors} and \eqref{eqBC}
            
            \State Update $( \theta_{\text{new}} = \theta_{\text{old}} - \alpha \cdot \nabla_\theta \mathcal{L} )$
        \EndFor
    \EndFor
    
    \State // \textbf{Train with L-BFGS}
    \For{$( q = 1, \dots, N_{\text{epochs}} $)}
        \For{each batch \( B \subset D\)}
            \State repeat steps 5-9
            \State Update $( \theta_{\text{new}} = \theta_{\text{old}} - \beta \cdot \nabla_\theta \mathcal{L} $)
        \EndFor
    \EndFor
\end{algorithmic}
\end{algorithm}

\section{Robustness to neural network approximation error and measurement noise}
\label{sec:robustness}
In this section, we study the effect of the neural network approximation error and the measurement noise on the convergence of the observer by considering the following structure 

\begin{equation}
\label{observerPINN}
\left\{\begin{array}{l}
\dot{\hat{x}}=f(\hat{x})+\hat{k}_\theta(\hat{x}, y_e)
\hat{y}=h(\hat{x}),
\end{array}\right.
\end{equation}

with $y_e$ and $\hat{k}_{\theta}$ representing respectively the noisy output and learned observer gain, which is given by
\begin{equation}
y_e=h(x)+v(t),
\label{noisyoutput}
\end{equation}
\begin{equation}
\hat{k}_\theta(\hat{x}, y) =k(\hat{x}, y)-\varepsilon(\hat{x}, y),
\label{apperror}
\end{equation}
where $v$ is the measurement noise and $\varepsilon$ is the neural network approximation error.

To study the robustness of the proposed PINN-based contraction nonlinear observer, we consider the following assumptions:
 
\begin{ass} The learned observer gain $\hat{k}_\theta$ is Lipschitz on its second argument, uniformly in $\hat{x}$ 
\begin{equation}
\label{lipschitz}
 \norm{ \hat{k}_\theta(., y_1)-\hat{k}_\theta(., y_2)}\leq  L \norm{y_1-y_2},   
\end{equation}
with $L>0$
\label{assumlipschitz}
\end{ass}

\begin{ass} There exists two positive and bounded constants $\bar{v}$ and $\bar{\epsilon}$ such that the measurement noise $v(t)$ and the neural network approximation error $\varepsilon$ satisfy 
\begin{itemize}
    \item (A1):  $\norm{v(t)}\leqslant \Bar{v} \quad \forall t\in \mathbb{R}$ 
    \item (A2):   $\norm{ \varepsilon(\hat{x}, y)}\leqslant \Bar{\varepsilon} \quad \forall (\hat{x}, y)\in \mathcal{X}\times \mathcal{Y}$.   
\end{itemize}
\label{assumvande}
\end{ass}

\noindent
Assumption \ref{assumlipschitz} can be easily satisfied by selecting a Lipschitz activation function for the neural network $T_\theta$. Moreover, since the states and the output of the system are bounded as per Assumption \ref{assumption1}, the approximation error $\varepsilon$ is bounded. 

\begin{theorem}\label{thISS}
Consider system \eqref{model} with the noisy output in \eqref{noisyoutput}, and the observer in \eqref{observerPINN}. Let assumptions \ref{assumlipschitz} and \ref{assumvande} hold. Let $\lambda$ be the contraction rate in \eqref{contractth} for $P=I$. If there exists a strictly positive constant $\eta$ and  $\lambda>2$ then the estimation error is exponentially input to state stable and satisfies:
\begin{equation}
\|x(t)-\hat{x}(t)\| \leqslant \|x(0)-\hat{x}(0)\| e^{-\eta t}+\frac{1}{2 \eta^{\frac{1}{2}}} \bar{\varepsilon}+\frac{L}{2 \eta^{\frac{1}{2}}} \bar{v}.\\
\label{estimationerror}
\end{equation}

\end{theorem}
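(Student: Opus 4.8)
The plan is to study the estimation error $e(t) = x(t) - \hat{x}(t)$ directly, exhibit it as the trajectory of a contracting system perturbed by two uniformly bounded inputs (one produced by the approximation error $\varepsilon$, one by the measurement noise $v$), and then close the argument with a Lyapunov–comparison estimate. First I would differentiate $e$ along \eqref{model} and \eqref{observerPINN} and substitute \eqref{apperror} to obtain $\dot{e} = f(x) - f(\hat{x}) - k(\hat{x},y_e) + \varepsilon(\hat{x},y_e)$, where $y = h(x)$ and $y_e = y + v$. Adding and subtracting $k(\hat{x},y)$, using the boundary condition $k(x,h(x)) = 0$ from \eqref{contractth} to write $f(x) = f(x) + k(x,y)$, and introducing $F(\cdot,y) := f(\cdot) + k(\cdot,y)$, the first difference becomes $F(x,y) - F(\hat{x},y)$; moreover, substituting $k = \hat{k}_\theta + \varepsilon$ in the added/subtracted term makes the two $\varepsilon(\hat{x},y_e)$ contributions cancel, leaving
\[
\dot{e} = \big[F(x,y) - F(\hat{x},y)\big] + \big[\hat{k}_\theta(\hat{x},y) - \hat{k}_\theta(\hat{x},y_e)\big] + \varepsilon(\hat{x},y).
\]

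Next I would treat the three terms separately. By the fundamental theorem of calculus, $F(x,y) - F(\hat{x},y) = M e$ with $M = \int_0^1 \tfrac{\partial F}{\partial \hat{x}}(\hat{x} + s e,\,y)\,ds$; since \eqref{contractth} with $P = I$ gives $\operatorname{He}\{\tfrac{\partial F}{\partial \hat{x}}\} \leq -2\lambda I$ pointwise, integrating this matrix inequality along the segment yields $\operatorname{He}\{M\} \leq -2\lambda I$, hence $e^{T} M e \leq -2\lambda\|e\|^2$. The second term is bounded by $L\|y - y_e\| = L\|v\| \leq L\bar{v}$ by Assumption \ref{assumlipschitz} together with (A1), and $\|\varepsilon(\hat{x},y)\| \leq \bar{\varepsilon}$ by (A2), invoking Assumption \ref{assumption1} so that $(\hat{x},y)$ remains in $\mathcal{X}\times\mathcal{Y}$.

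Finally, with $V = \tfrac12\|e\|^2$ the above gives $\dot{V} \leq -2\lambda\|e\|^2 + \|e\|\,(\bar{\varepsilon} + L\bar{v})$, and therefore, working with $\|e\|$ directly (valid in the Dini-derivative sense at the origin), $\tfrac{d}{dt}\|e\| \leq -2\lambda\|e\| + \bar{\varepsilon} + L\bar{v}$. The comparison lemma then yields $\|e(t)\| \leq \|e(0)\|\,e^{-2\lambda t} + \tfrac{1}{2\lambda}(\bar{\varepsilon} + L\bar{v})$. For any choice of $\eta \in (0,\,2\lambda]$ the hypothesis $\lambda > 2$ guarantees $\eta \leq 2\lambda \leq \lambda^2$, i.e.\ $\sqrt{\eta} \leq \lambda$, so that $e^{-2\lambda t} \leq e^{-\eta t}$ and $\tfrac{1}{2\lambda} \leq \tfrac{1}{2\eta^{1/2}}$; substituting gives exactly \eqref{estimationerror}.

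The step I expect to be the main obstacle is the implication $\operatorname{He}\{\tfrac{\partial F}{\partial\hat{x}}\}\leq -2\lambda I \Rightarrow \operatorname{He}\{M\}\leq -2\lambda I$ with the right generality: one must be sure the contraction inequality holds all along the segment joining $\hat{x}$ and $x$ (which is fine here, since \eqref{contractth} is posed on all of $\mathbb{R}^n\times\mathbb{R}^p$), and, relatedly, the bound on $\varepsilon$ in the last term tacitly requires $\hat{x}(t)$ to stay in the compact set $\mathcal{X}$ on which the network approximates $k$ — making that precise (beyond Assumption \ref{assumption1}, which only constrains the plant trajectory) is the genuine subtlety, whereas the remaining Young-type estimates are routine.
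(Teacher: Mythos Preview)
Your argument is correct and, in fact, yields a sharper estimate than the paper's own proof, which you then legitimately weaken to recover \eqref{estimationerror}. The two proofs differ in how the perturbation terms are absorbed. The paper works with $V=\tfrac12\|e\|^2$, adds and subtracts $\hat{k}_\theta(\hat{x},y)$, applies Young's inequality to the cross terms $(x-\hat{x})^T\varepsilon$ and $(x-\hat{x})^T[\hat{k}_\theta(\hat{x},y)-\hat{k}_\theta(\hat{x},y_e)]$ (picking up an extra $\|e\|^2$ and the \emph{squared} bounds $\tfrac12\bar\varepsilon^2$, $\tfrac{L^2}{2}\bar v^2$), and then invokes a cited identity from \cite{BERNARD2022} for the contraction term, arriving at $\dot V\le -(\lambda-2)V+\tfrac12\bar\varepsilon^2+\tfrac{L^2}{2}\bar v^2$ with the specific choice $\eta=\lambda-2$. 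You instead use Cauchy--Schwarz directly, keep the perturbations linear in $\|e\|$, prove the contraction estimate yourself via the integral mean-value representation of $F(x,y)-F(\hat{x},y)$, and obtain $\tfrac{d}{dt}\|e\|\le -2\lambda\|e\|+\bar\varepsilon+L\bar v$. This buys you the faster decay rate $2\lambda$ (versus $\lambda-2$) and avoids the external citation; the paper's route is slightly more mechanical but costs a looser constant. Your observation that the hypothesis $\lambda>2$ is precisely what is needed to relax $e^{-2\lambda t}$ and $\tfrac{1}{2\lambda}$ to $e^{-\eta t}$ and $\tfrac{1}{2\sqrt\eta}$ for any $\eta\in(0,2\lambda]$ is a nice way to reconcile the two. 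The caveat you raise about $\hat{x}(t)\in\mathcal{X}$ (needed for the bound on $\varepsilon$) is genuine and is glossed over in the paper as well.
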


\begin{proof}
Consider the following Lyapunov function\\

$ V=\frac{1}{2}\norm{x-\hat{x}}^2$

\begin{align}
\dot{V}& =(x-\hat{x})^T \left[f(x)-f(\hat{x})-\hat{k}_\theta\left(\hat{x}, y_e\right)\right]  \notag \\ 
& =(x-\hat{x})^T \left[f(x)-f(\hat{x})-\hat{k}_\theta \left(\hat{x}, y_e\right)+\hat{k}_\theta (\hat{x}, y)-\hat{k}_\theta (\hat{x}, y)\right] \notag \\
& =(x-\hat{x})^T [f(x)-f(\hat{x})-k(\hat{x}, y)+\varepsilon(\hat{x}, y) \notag \\
& \hspace{1.7cm}\left.+\hat{k}_\theta(\hat{x}, y)-\hat{k}_\theta\left(\hat{x}, y_e\right)\right] \notag \\
& =(x-\hat{x})^T [f(x)-f(\hat{x})-k(\hat{x}, y)] +(x-\hat{x})^T \varepsilon(\hat{x}, y) \notag \\
& +(x-\hat{x})^T \left[\hat{k}_\theta(x, y)-\hat{k}_\theta\left(\hat{x}, y_e\right)\right]. \label{eqV1}
\end{align}

Using Young's Inequality, \eqref{eqV1} becomes 

\begin{align} 
\dot{V} \leqslant & (x-\hat{x})^{T} [f(x)-f(\hat{x})-k(\hat{x}, y)]  +||x-\hat{x}||^2  \notag \\
& +\frac{1}{2}\|\varepsilon(\hat{x}, y)\|^2 +\frac{1}{2}\left\|\hat{k}_\theta(\hat{x}, y)-\hat{k}_\theta\left(\hat{x}, y_e\right)\right\|^2.
\label{eqV2}
\end{align}

Substituting \eqref{lipschitz} in \eqref{eqV2}, one obtains 
\begin{align}
\dot{V} \leqslant & (x-\hat{x})^{T} [f(x)-f(\hat{x})-k(\hat{x}, y)]  +||x-\hat{x}||^2  \notag \\
& +\frac{1}{2}\|\varepsilon(\hat{x}, y)\|^2 +\frac{L^2}{2}\left\|v(t)\right\|^2. \label{eqV3}
\end{align}
 Using the identity in \cite[see Theorem 4.3 p.~231] {BERNARD2022} for $y=h(x)$, one obtains 
\begin{equation}
  \dot{V} \leqslant -(\lambda -2) V+\frac{1}{2}\|\varepsilon\|^2+\frac{L^2}{2} \|v\|^2 .
\end{equation}

Finally, the exponential input to state stability is ensured if $\lambda > 2$ and the estimation error is given by \eqref{estimationerror} for $\eta=\lambda -2$.
\end{proof}

\noindent
Theorem \ref{thISS} indicates that increasing the contraction rate $\lambda$ can further mitigate the impact of the neural network approximation errors and measurement noise on the estimation. However, a higher contraction rate reduces the likelihood of finding a gain $k$ that satisfies the MPDI.

\section{Numerical Simulations}\label{sec:sim}

To evaluate the performance and disturbance and noise rejection of the proposed learning-based contraction nonlinear observer, we perform numerical simulations on two nonlinear systems: nonlinear Van der Pol and reverse Duffing oscillators given by \eqref{VdP} and \eqref{duffing}, respectively. 
 
\begin{tabular}{p{5.2cm} p{3cm}}
 \begin{equation}\label{VdP}
\!\!\!\!\left\{\begin{array}{l}
\dot{x}_{1}\!=  x_{2} \\
\dot{x}_{2}=-x_{1}+x_2(1-x_{1}^2)\! \\
y=x_{1}
\end{array}\right.
\end{equation}  &  \begin{equation}\label{duffing}
    \!\!\!\!\left\{\begin{array}{l}
    \dot{x}_{1}\!=  x_{2}^3 \\
    \dot{x}_{2}= -x_{1}\! \\
    y = x_{1}    
\end{array}\right.
\end{equation}
\end{tabular}

\begin{figure}
    \centering
    \includegraphics[width=1\linewidth]{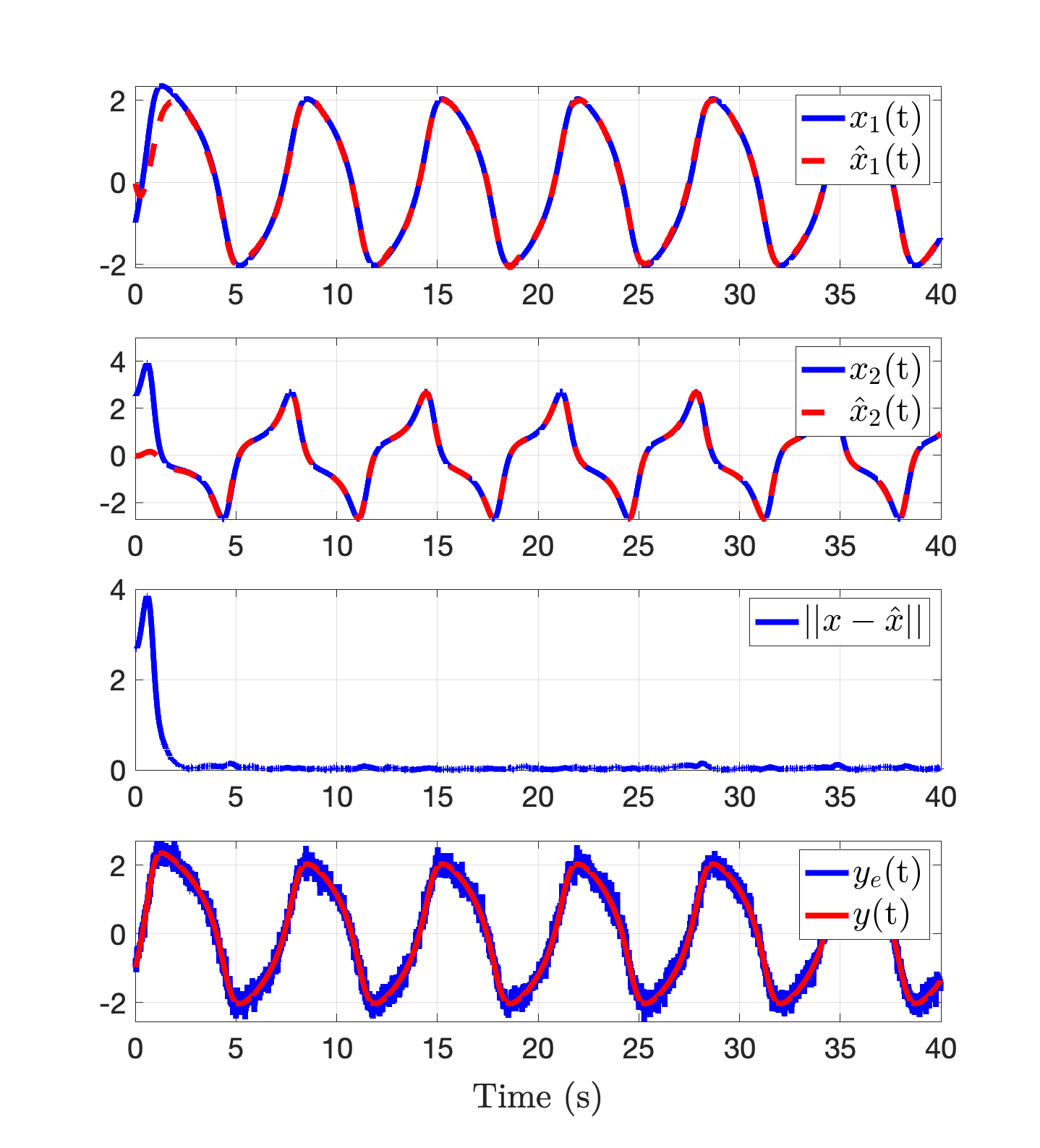}
    \caption{State estimation of the Van der Pol Oscillator under 15$\%$ of measurement noise using the proposed PINN-based contraction observer.}
    \label{fig:VanderPolWnoise}
\end{figure}

\begin{figure}
    \centering
    \includegraphics[width=1\linewidth]{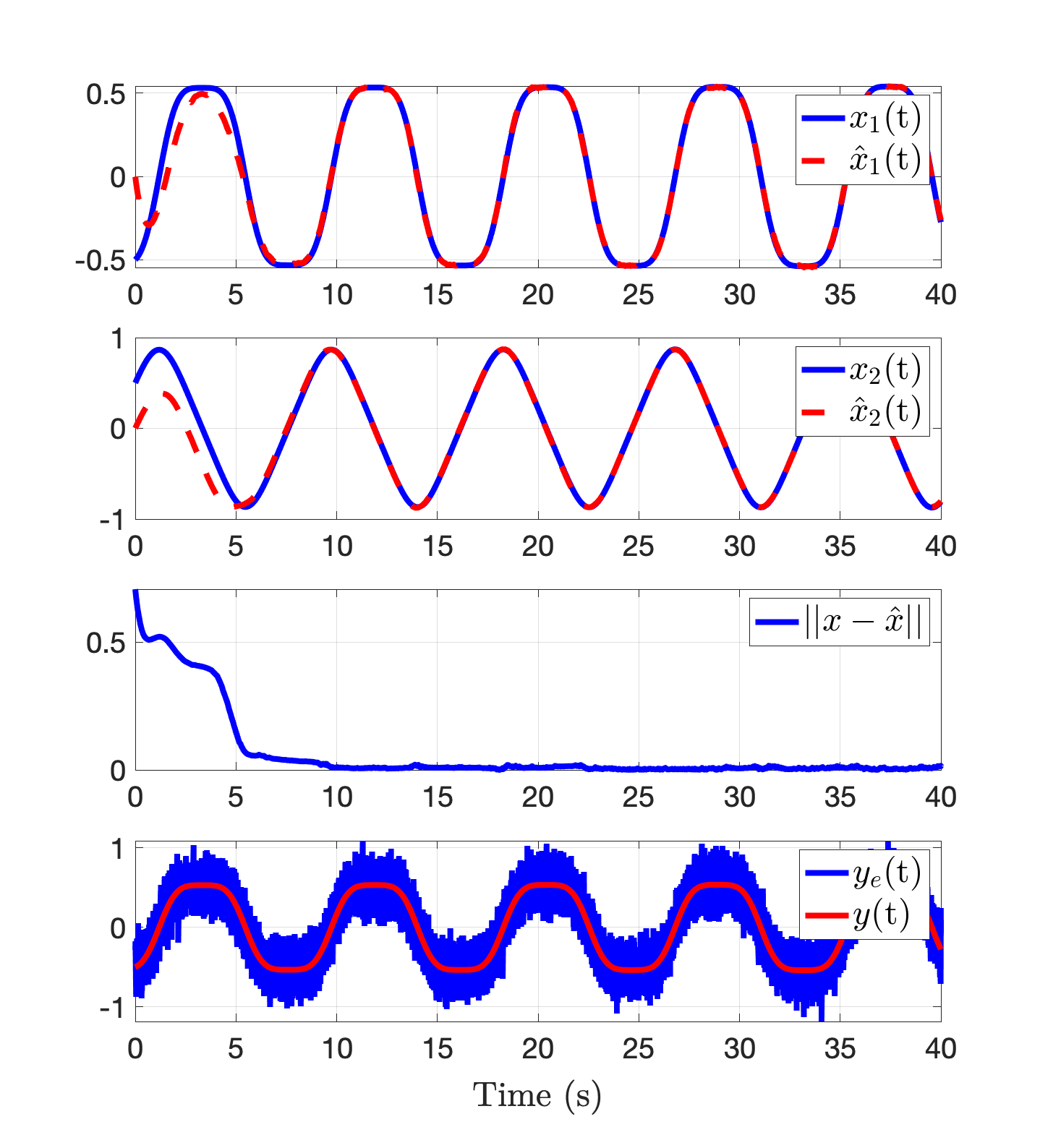}
    \caption{State estimation of the Inverse Duffing Oscillator under 15$\%$ of measurement noise using the proposed PINN-based contraction observer}
    \label{fig:DuffungWnoise}
\end{figure}

A training dataset of 4000 samples was generated by uniformly sampling the region of interest $(\mathcal{X}, \mathcal{Y})$, where $(\mathcal{X}, \mathcal{Y})=([-2,2] \times[-3,3], [-2, 2])$ for the Van der Pol oscillator and $(\mathcal{X}, \mathcal{Y})=([-1,1]^2, [-1, 1])$ for the reverse duffing. The considered architecture for both systems is a multi-layer perceptron of 5 hidden layers with 30 neurons each. The neural network was trained using the Adam optimizer \cite{kingma2014adam} with a learning rate of \( \alpha= 10^{-3}\) for 500 iterations followed by the same amount of iterations for the L-BFGS optimization algorithm \cite{byrd1995limited} with $\beta=1$, which aligns with the methodologies reported in the literature that show better performance for training physics-based deep learning models \cite{raissi2019, rathore2024challenges, lou2021physics, lehmann2023mixed, haitsiukevich2023improved,
antonelo2024physics, ekeland2024physics
}. The training was performed on a single NVIDIA A40 GPU. 

A contraction rate $\lambda$ of 2.5 is considered for both systems and the loss functions' weights for both systems are provided in Table \ref{tab:weights}. \\
\begin{table}
    \centering
    \caption{Loss functions' weights}
    \begin{tabular}{|p{2cm}|p{1cm}|p{1cm}|p{1cm}|p{1cm}|}
    \hline
      Weights  & $\mu_1$ & $\mu_2$ & $\rho_1$ & $\rho_2$  \\
      \hline
      Van der Pol & $10^{-3}$ & 1 & 1 & $10^{-1}$\\
      \hline
      Reverse Duffing &  1 & 1& 1&1\\
      \hline
    \end{tabular}
    \label{tab:weights}
\end{table}

\noindent
The simulation results are depicted in Fig.\ref{fig:VanderPolWnoise} and \ref{fig:DuffungWnoise} under a white Gaussian noise of $0.15$ magnitude. The Van der Pol oscillator was initialized at [-1, 2.5], the reverse duffing was initialized at [-0.5; 0.5], and the observer for both systems was initialized at the origin. One can see that the proposed learning-based contraction nonlinear observer was able to accurately estimate the states of both systems \eqref{VdP} and \eqref{duffing} and demonstrated good noise rejection. Furthermore, one can see that the estimation error stays bounded within a neighborhood of the origin defined by the approximation error of the neural network and the noise level. To further assess the robustness of the proposed observer to measurement noise, the PINN was trained for several values of the contraction rate $\lambda$, and the observer was simulated under the same level of measurement noise for $\hat{x}(0)=x(0)$. The results in Table $\ref{tab:noise}$ confirm the findings of Theorem \ref{thISS} and show that increasing the contraction rate reduces the effect of the measurement noise on the estimation error.  \\

\begin{table}
    \centering
    \caption{Mean squared estimation error in $\%$ for different contraction rates and 15$\%$ of measurement noise  }
    \begin{tabular}{|p{2cm}|p{1cm}|p{1cm}|p{1cm}|}
    \hline
      $\lambda$ & $2.5$ & $4$ & $5$   \\
      \hline
      Van der Pol & $0.53$ & $0.47$ & $0.22$ \\
      \hline
      Reverse Duffing & $0.062$  & $0.016$ & $0.013$ \\
      \hline
    \end{tabular}
    \label{tab:noise}
\end{table}

\section{Conclusion}\label{sec:conclusion}
The present paper addressed a long-lasting drawback of contraction-based nonlinear observer design, by proposing an unsupervised learning-based approach to design the nonlinear observer's gain. The proposed approach relies on a physics-informed neural network that enforces the contraction condition to better approximate the gain of an exponentially stable observer for an autonomous nonlinear system. The effect of the neural network approximation error and measurement noise was studied, and conditions for ensuring exponential input-to-state stability were derived. The proposed learning-based contraction nonlinear observer demonstrated good performance and noise rejection in numerical simulation. Future work focuses on extending the proposed approach to nonlinear non-autonomous systems.

\section*{Acknowledgment}
\noindent Research reported in this publication was supported by King Abdullah University of Science and Technology (KAUST) with the  Base Research Fund (BAS/1/1627-01-01) and (BAS/1/1665-01-01), and the National Institute for Research in Digital Science and Technology (INRIA). The authors would also like to thank Ibrahima N'doye for the fruitful discussions that led to this paper.

\bibliographystyle{unsrt}
\bibliography{arxivTex}

\begin{thebibliography}{10}

\bibitem{Luenberger1964}
D.~G. Luenberger.
\newblock Observing the state of a linear system.
\newblock {\em IEEE Transactions on Military Electronics}, 8(2):74--80, 1964.

\bibitem{Kalman1960}
R.~E. Kalman.
\newblock {A New Approach to Linear Filtering and Prediction Problems}.
\newblock {\em Journal of Basic Engineering}, 82(1):35--45, 03 1960.

\bibitem{khalil2008}
Hassan~K. Khalil.
\newblock High-gain observers in nonlinear feedback control.
\newblock In {\em 2008 International Conference on Control, Automation and Systems}, pages xlvii--lvii, 2008.

\bibitem{karagiannis2008invariant}
Dimitrios Karagiannis, Daniele Carnevale, and Alessandro Astolfi.
\newblock Invariant manifold based reduced-order observer design for nonlinear systems.
\newblock {\em IEEE Transactions on Automatic Control}, 53(11):2602--2614, 2008.

\bibitem{KRENER1983}
Arthur~J Krener and Alberto Isidori.
\newblock Linearization by output injection and nonlinear observers.
\newblock {\em Systems \& Control Letters}, 3(1):47--52, 1983.

\bibitem{Rajamani1998}
R.~Rajamani.
\newblock Observers for lipschitz nonlinear systems.
\newblock {\em IEEE Transactions on Automatic Control}, 43(3):397--401, 1998.

\bibitem{Yasmine2023}
Yasmine Marani, Ibrahima N’Doye, and Taous~Meriem Laleg-Kirati.
\newblock Non-asymptotic neural network-based state and disturbance estimation for a class of nonlinear systems using modulating functions.
\newblock In {\em 2023 American Control Conference}, 2023.

\bibitem{Kazantzis1997}
N.~Kazantzis and C.~Kravaris.
\newblock Nonlinear observer design using {L}yapunov's auxiliary theorem.
\newblock In {\em Proceedings of the 36th IEEE Conference on Decision and Control}, volume~5, pages 4802--4807 vol.5, 1997.

\bibitem{anderson1979}
B.D.O. Anderson and J.B. Moore.
\newblock {\em Optimal Filtering}.
\newblock Information and system sciences series. Prentice-Hall, 1979.

\bibitem{slotine1996a}
W.~Lohmiller and J.-J.E. Slotine.
\newblock On metric observers for nonlinear systems.
\newblock In {\em Proceeding of the 1996 IEEE International Conference on Control Applications}, pages 320--326, 1996.

\bibitem{LOHMILLER1998683}
W.~Lohmiller and Slotine.
\newblock On contraction analysis for non-linear systems.
\newblock {\em Automatica}, 34(6):683--696, 1998.

\bibitem{Nijmeijer1990}
Henk Nijmeijer and Arjan van~der Schaft.
\newblock {\em Nonlinear dynamical control systems}.
\newblock Springer-Verlag, Berlin, Heidelberg, 1990.

\bibitem{isidori1995nonlinear}
A.~Isidori.
\newblock {\em Nonlinear Control Systems}.
\newblock Communications and Control Engineering. Springer London, 1995.

\bibitem{khalil2002nonlinear}
H.K. Khalil.
\newblock {\em Nonlinear Systems}.
\newblock Pearson Education. Prentice Hall, 2002.

\bibitem{manchester2018contracting}
Ian~R Manchester.
\newblock Contracting nonlinear observers: Convex optimization and learning from data.
\newblock In {\em 2018 annual American Control Conference (ACC)}, pages 1873--1880. IEEE, 2018.

\bibitem{slotine1996b}
W.~Lohmiller and J.-J.E. Slotine.
\newblock Applications of metric observers for nonlinear systems.
\newblock In {\em Proceeding of the 1996 IEEE International Conference on Control Applications}, pages 367--372, 1996.

\bibitem{lohmiller1997}
W.~Lohmiller and J.-J.E. Slotine.
\newblock Simple observers for hamiltonian systems.
\newblock In {\em Proceedings of the 1997 American Control Conference (Cat. No.97CH36041)}, volume~5, pages 2748--2753 vol.5, 1997.

\bibitem{sanfelice2011convergence}
Ricardo~G Sanfelice and Laurent Praly.
\newblock Convergence of nonlinear observers on $\mathbb{R}^{n}$ with a riemannian metric (part i).
\newblock {\em IEEE Transactions on Automatic Control}, 57(7):1709--1722, 2011.

\bibitem{dani2014observer}
Ashwin~P Dani, Soon-Jo Chung, and Seth Hutchinson.
\newblock Observer design for stochastic nonlinear systems via contraction-based incremental stability.
\newblock {\em IEEE Transactions on Automatic Control}, 60(3):700--714, 2014.

\bibitem{PeterGiesl2023}
Peter Giesl, Sigurdur Hafstein, and Christoph Kawan.
\newblock Review on contraction analysis and computation of contraction metrics.
\newblock {\em Journal of Computational Dynamics}, 10(1):1--47, 2023.

\bibitem{raissi2019}
Maziar Raissi, Paris Perdikaris, and George~E Karniadakis.
\newblock Physics-informed neural networks: A deep learning framework for solving forward and inverse problems involving nonlinear partial differential equations.
\newblock {\em Journal of Computational physics}, 378:686--707, 2019.

\bibitem{zhai2023parameter}
Weida Zhai, Dongwang Tao, and Yuequan Bao.
\newblock Parameter estimation and modeling of nonlinear dynamical systems based on runge--kutta physics-informed neural network.
\newblock {\em Nonlinear Dynamics}, 111(22):21117--21130, 2023.

\bibitem{alvarez2024nonlinear}
Hector~Vargas Alvarez, Gianluca Fabiani, Nikolaos Kazantzis, Ioannis~G Kevrekidis, and Constantinos Siettos.
\newblock Nonlinear discrete-time observers with physics-informed neural networks.
\newblock {\em Chaos, Solitons \& Fractals}, 186:115215, 2024.

\bibitem{niazi2023learning}
Muhammad Umar~B Niazi, John Cao, Xudong Sun, Amritam Das, and Karl~Henrik Johansson.
\newblock Learning-based design of luenberger observers for autonomous nonlinear systems.
\newblock In {\em 2023 American Control Conference (ACC)}, pages 3048--3055. IEEE, 2023.

\bibitem{peralez2021deep}
Johan Peralez and Madiha Nadri.
\newblock Deep learning-based luenberger observer design for discrete-time nonlinear systems.
\newblock In {\em 2021 60th IEEE Conference on Decision and Control (CDC)}, pages 4370--4375. IEEE, 2021.

\bibitem{Yasmine2023kkl}
Yasmine Marani, Ibrahima N'Doye, and Taous~Meriem Laleg-Kirati.
\newblock Deep-learning based design of cascade observers for discrete-time nonlinear systems with output delay.
\newblock {\em IFAC-PapersOnLine}, 56(2):9869--9874, 2023.
\newblock 22nd IFAC World Congress.

\bibitem{MARANI2025}
Yasmine Marani, Ibrahima N'Doye, and Taous~Meriem Laleg-Kirati.
\newblock Deep-learning based kkl chain observer for discrete-time nonlinear systems with time-varying output delay.
\newblock {\em Automatica}, 171:111955, 2025.

\bibitem{antonelo2024physics}
Eric~Aislan Antonelo, Eduardo Camponogara, Laio~Oriel Seman, Jean~Panaioti Jordanou, Eduardo~Rehbein de~Souza, and Jomi~Fred H{\"u}bner.
\newblock Physics-informed neural nets for control of dynamical systems.
\newblock {\em Neurocomputing}, 579:127419, 2024.

\bibitem{ekeland2024physics}
Jonas~Ekeland Kittelsen, Eric~Aislan Antonelo, Eduardo Camponogara, and Lars~Struen Imsland.
\newblock Physics-informed neural networks with skip connections for modeling and control of gas-lifted oil wells.
\newblock {\em Applied Soft Computing}, 158:111603, 2024.

\bibitem{hahn1967}
Wolfgang Hahn.
\newblock {\em Stability of Motion}.
\newblock Springer, 1967.

\bibitem{BERNARD2022}
Pauline Bernard, Vincent Andrieu, and Daniele Astolfi.
\newblock Observer design for continuous-time dynamical systems.
\newblock {\em Annual Reviews in Control}, 53:224--248, 2022.

\bibitem{gilbert1991positive}
George~T Gilbert.
\newblock Positive definite matrices and sylvester's criterion.
\newblock {\em The American Mathematical Monthly}, 98(1):44--46, 1991.

\bibitem{kingma2014adam}
Diederik~P Kingma.
\newblock Adam: A method for stochastic optimization.
\newblock {\em arXiv preprint arXiv:1412.6980}, 2014.

\bibitem{byrd1995limited}
Richard~H Byrd, Peihuang Lu, Jorge Nocedal, and Ciyou Zhu.
\newblock A limited memory algorithm for bound constrained optimization.
\newblock {\em SIAM Journal on scientific computing}, 16(5):1190--1208, 1995.

\bibitem{rathore2024challenges}
Pratik Rathore, Weimu Lei, Zachary Frangella, Lu~Lu, and Madeleine Udell.
\newblock Challenges in training pinns: A loss landscape perspective.
\newblock {\em arXiv preprint arXiv:2402.01868}, 2024.

\bibitem{lou2021physics}
Qin Lou, Xuhui Meng, and George~Em Karniadakis.
\newblock Physics-informed neural networks for solving forward and inverse flow problems via the boltzmann-bgk formulation.
\newblock {\em Journal of Computational Physics}, 447:110676, 2021.

\bibitem{lehmann2023mixed}
Fran{\c{c}}ois Lehmann, Marwan Fahs, Ali Alhubail, and Hussein Hoteit.
\newblock A mixed pressure-velocity formulation to model flow in heterogeneous porous media with physics-informed neural networks.
\newblock {\em Advances in Water Resources}, 181:104564, 2023.

\bibitem{haitsiukevich2023improved}
Katsiaryna Haitsiukevich and Alexander Ilin.
\newblock Improved training of physics-informed neural networks with model ensembles.
\newblock In {\em 2023 International Joint Conference on Neural Networks (IJCNN)}, pages 1--8. IEEE, 2023.

\end{thebibliography}
\end{document}